\newtheorem{theorem}{Theorem}
\theoremstyle{definition}
\newtheorem{definition}{Definition}
\theoremstyle{remark}
\theoremstyle{definition}
\theoremstyle{definition}
\newtheorem{problem}{Problem}
\DeclareMathOperator*{\diag}{diag}
\DeclareMathOperator*{\argmin}{arg\,min}
\newcommand{\R}{\mathbb{R}}
\newcommand{\C}{\mathcal{C}}
\definecolor{blue}{RGB}{38,38,134}
\definecolor{darkblue}{RGB}{0,0,102}
\definecolor{lightblue}{RGB}{77,77,148}
\definecolor{gold}{RGB}{234, 170, 0}
\definecolor{metallic_gold}{RGB}{139, 111, 78}
\newcommand{\software}[1]{\texttt{#1}}
\newcommand{\mb}[1]{\mathbf{ #1 }}
\begin{document}
\begin{spacing}{0.92}
\title{ \bf   
Safe Payload Transfer with Ship-Mounted Cranes: \\ A Robust Model Predictive Control Approach 
}

\author{Ersin Da\c{s}$^{1, 3}$$^{\star}$, William A. Welch$^{1}$$^{\star}$, Patrick Spieler$^{2}$$^{\star}$, Keenan Albee$^{2,4}$$^{\star}$, Aurelio Noca$^{1}$, Jeffrey Edlund$^{2}$, \\ Jonathan Becktor$^{2}$, Thomas Touma$^{1,5}$, Jessica Todd$^{1}$, Sriramya Bhamidipati$^{2}$, Stella Kombo$^{1}$, \\ Maira Saboia$^{2}$, Anna Sabel$^{2}$, Grace Lim$^{2}$, Rohan Thakker$^{2}$$^{\star}$, Amir Rahmani$^{2}$, and Joel W. Burdick$^{1}$
\thanks{$^{\star}$These authors contributed equally.}
\thanks{**This work was supported by DARPA under the LINC Program.}
\thanks{$^{1}$E. Da\c{s}, W. A. Welch, A. Noca, T. Touma, J. Todd, S. Kombo, J. W. Burdick are with the Department of Mechanical and Civil Engineering, California Institute of Technology, Pasadena, CA 91125, USA. ${\tt\small \{ersindas, jburdick \}@caltech.edu}$ }
\thanks{$^{2}$P. Spieler, K. Albee, J. Edlund, J. Becktor, S. Bhamidipati, M. Saboia, A. Sabel, G. Lim, R. Thakker, A. Rahmani are with the Jet Propulsion Laboratory, California Institute of Technology, Pasadena, CA 91109, USA. ${\tt\small amir.rahmani@jpl.nasa.gov}$ }
\thanks{$^{3}$E. Da\c{s} is also with the Department of Mechanical, Materials, and Aerospace Engineering, Illinois Institute of Technology, Chicago, IL 60616, USA. ${\tt\small edas2@illinoistech.edu}$}
\thanks{$^{4}$K. Albee is also with the University of Southern California, Los Angeles, CA 90089, USA. ${\tt\small kalbee@usc.edu}$ } 
\thanks{$^{5}$T. Touma is also with Stealth Labs, Pasadena, CA 91106, USA. ${\tt\small thomas@stealthlabshq.com}$ }
}

\maketitle
\pagestyle{plain}

\begin{abstract}
Ensuring safe real-time control of ship-mounted cranes in unstructured transportation environments requires handling multiple safety constraints while maintaining effective payload transfer performance. Unlike traditional crane systems, ship-mounted cranes are consistently subjected to significant external disturbances affecting underactuated crane dynamics due to the ship's dynamic motion response to harsh sea conditions, which can lead to robustness issues. To tackle these challenges, we propose a robust and safe model predictive control (MPC) framework and demonstrate it on a 5-DOF crane system, where a Stewart platform simulates the external disturbances that ocean surface motions would have on the supporting ship. The crane payload transfer operation must avoid obstacles and accurately place the payload within a designated target area. We use a robust zero-order control barrier function (R-ZOCBF)-based safety constraint in the nonlinear MPC to ensure safe payload positioning, while time-varying bounding boxes are utilized for collision avoidance. We introduce a new optimization-based online robustness parameter adaptation scheme to reduce the conservativeness of R-ZOCBFs. Experimental trials on a crane prototype demonstrate the overall performance of our safe control approach under significant perturbing motions of the crane base. While our focus is on crane-facilitated transfer, the methods more generally apply to safe robotically-assisted parts mating and parts insertion. 
\end{abstract}

\section{Introduction} 
The safe and accurate transfer or mating of parts and objects is a common objective in robotics applications.  While this paper focuses on problems related to ship-board cranes, the basic problem of safety during parts transfer is endemic in robotics. Crane systems are commonly used in various industries, including construction, manufacturing, ports, mining, and logistics, to lift and move heavy loads \cite{tordal2016, hoffman2021, bock2013real, wang2025event, ouyang2025, tian2024safety, mojallizadeh2023}. When these operations take place offshore, the same tasks must be performed on moving platforms, making safety and precision even more critical \cite{luan2025marine}. Specifically, ship-mounted cranes operating in dynamic marine environments, such as offshore logistics, ship-to-ship transfer, and subsea operations, are influenced by ocean wave-induced ship motion and wind conditions. These environmental disturbances, combined with the underactuated dynamics of the cranes and the swing of the payload, require a focus on safety constraints, including collision avoidance and compliance with operational limits. Therefore, ensuring safety is essential in addition to achieving performance goals, such as efficient payload transfer, highlighting the need for robust safety-critical control frameworks in these applications.

Various strategies have been proposed to enhance the robustness of control for underactuated ship-mounted cranes. These include nonlinear control \cite{wang2023ship}, adaptive sliding-mode control \cite{kim2019adaptive}, and neural network-based control \cite{yang2019neural}. Additionally, model predictive control (MPC)-based methods have been adapted for this control problem to enforce constraints and stabilize the system \cite{schubert2023, kimiaghalam2001, chen2023nmpc, cao2022lyapunov}. MPC schemes are particularly beneficial for integrating human operator control with the autonomous operation of cranes, as they can treat operator commands as references and apply the smallest corrective control input needed to satisfy operational constraints under actuator limitations \cite{giacomelli2019}. However, balancing robustness, safety, and performance continues to be a significant challenge for MPC. While ensuring safety is crucial, it is also important to avoid unnecessary conservativeness, as this can adversely affect performance, such as payload positioning accuracy. Conventional robust MPC methods commonly utilize tightened sets or chance constraints, which can be conservative in the presence of time-varying uncertainties. Furthermore, if the system dynamics are nonlinear, it is generally challenging to guarantee safety, which is often framed as forward set invariance, using MPC methods. 

Control barrier functions (CBFs) \cite{ames2017control} have emerged as a tool for synthesizing controllers that guarantee forward invariance of a given safe set. Incorporating CBFs into MPC constraints encodes forward invariance of the sets at each step and can improve closed-loop feasibility relative to standard state constraints \cite{zeng2021enhancing}. Discrete-time CBFs \cite{agrawal2017, ahmadi2019safe}, which extend continuous-time CBFs to certify safety at sampling instants, are combined with MPC \cite{zeng2021safety}. However, with discrete-time CBFs, safety is only enforced at sample times, which may result in transient inter-sample violations occurring between time steps. Moreover, if an appropriate discrete-time model is not selected for a continuous-time system, a high relative degree constraint of that system will continue to be represented as a high relative degree in the discrete-time model as well. To address these limitations, a notion of zero-order control barrier functions (ZOCBFs) for sampled-data systems was recently proposed in \cite{tan2025zero}. Although ZOCBFs are robust to sampling effects by design, their robustness against unmodeled system dynamics has not yet been explored. Additionally, ZOCBFs have not been tested in an experimental control system.  

In this paper, we propose a safe control framework for payload transfer using ship-mounted cranes within an operator-in-the-loop scheme. In the problem of payload positioning with a crane, complex geometric safety constraints arise from the target shape, the underactuated dynamics of the crane (especially as it is perturbed by the impact of ocean swells on its supporting ship), and the nature of the application. Our experimental application is motivated by the need to insert a cylindrical object into a tightly constrained cylindrical receptacle, without hitting objects on the deck, or hitting the surface of the deck beyond the lip of the receptacle.

We formulate a nonlinear MPC problem with a ZOCBF-based safety constraint to prevent collisions between the payload and the target set in the robot's operational space. To define a safe set for the payload around the target, we introduce a smooth safety function. We generate real-time computable time-varying bounding boxes for collision avoidance. Additionally, we introduce a sampling-based method for the online tuning of the robustifying parameter within the robust ZOCBF (R-ZOCBF) condition. This strategy allows the controller to adaptively ensure robustness against modeling uncertainties and disturbances while balancing the conservativeness of the control with the changing nature of the uncertainties. We manage this balance by optimizing the R-ZOCBF parameter using perturbations sampled from the current state. We demonstrate the effectiveness of our approach through hardware experiments on a 5-DOF crane, where we use a Stewart platform to simulate the external disturbances due to ocean wave-induced ship motion.

We organize this paper as follows: Section~\ref{sec:pre} introduces the formulation of our safety-critical crane control problem. Section~\ref{sec:main} presents our robust, safety-critical framework based on MPC with an online adaptive ZOCBF constraint. Section~\ref{sec:exp} discusses simulations and experiments, while Section~\ref{sec:conc} concludes the paper.

\section{Problem Statement} 
\label{sec:pre}
In this section, we first model the dynamics of our crane system. Next, we define a concept of safety using time-varying safety functions. Finally, we present the problem statement of our work. 

\textbf{Notation:} We use standard notation: ${\mathbb{R}}$ represents the set of real numbers, and $\mathbb{N}$ represents the set of natural numbers. The Euclidean norm of a vector is denoted by $\|\!\cdot\!\|$.  
An ${n \!\times\! m}$ zero matrix is denoted by ${\bf 0_{n \times m}}$. Similarly, ${\bf I_{n}}$ denotes the ${n \!\times\! n}$ identity matrix. For a vector ${v \!\in\! \mathbb{R}^n}$ and a diagonal matrix ${{W } \!\in\! \mathbb{R}^{n \times n}}$, we use the matrix-weighted norm ${\|v\|_{W}^2 \!\triangleq\! v^\top {W} v}$. The unit circle is denoted by $\mathbb{S}^1$.

\subsection{Underactuated 5-DOF Crane Dynamics}
\label{underactuated}

\begin{figure}[t]
    \centering
    \includegraphics[width=1\linewidth]{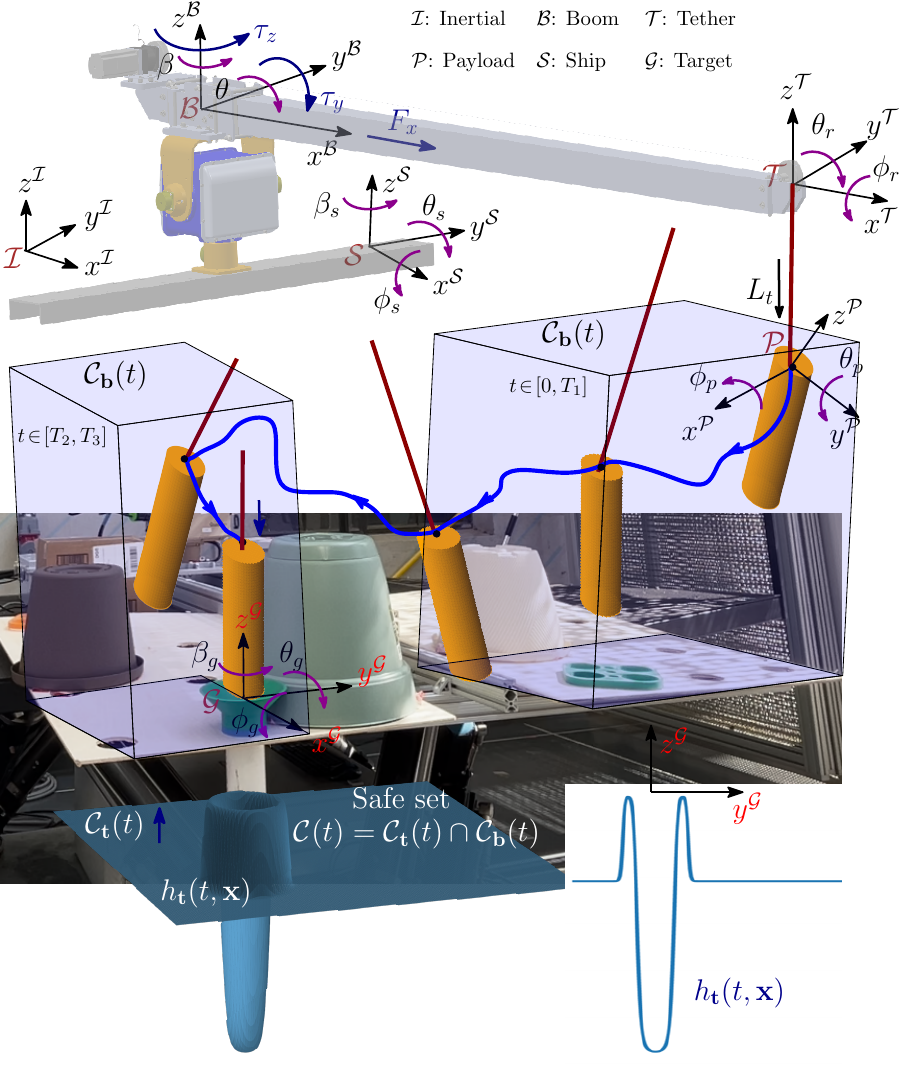}
    \caption{5-DOF ship-mounted crane with a task geometry and reference frames. Around the payload, time-varying collision-free \emph{bounding boxes} are depicted (Top). The smooth safety function for the target, ${h_{\mb t}}$, is visualized by its boundary, shown with a small display-only shift along ${x_{\mathcal{G}}}$ and ${-z_{\mathcal{G}}}$ to avoid occlusion (bottom-left). A cross-section of the zero-level set ${h_{\mb t}(t, \mb x(t)) \!=\! 0}$ (the boundary of the safe set $\C_{\mb t}(t)$, $\partial \C_{\mb t}(t)$) (bottom-right).} 
    \label{fig:crane}
     \vskip - 5mm
 \end{figure}

To obtain a nonlinear model of the crane system with the impact of ship motion on its dynamics, we use the Euler–Lagrange technique. In this work, we consider a 5-DOF electromechanical crane configuration illustrated in Fig.~\ref{fig:hardware}. We define the coordinate frames as depicted in Fig.~\ref{fig:crane}.

First, we define the generalized coordinates for the crane as ${\mb{q} \!\triangleq\! \left[ \beta~\theta~L_t~\phi_r~\theta_r~\phi_p~\theta_p \right]^\top}$, where ${\beta \!\in\! \mathbb{S}^1}$ is the crane yaw, ${\theta \!\in\! \mathbb{S}^1}$ is the boom pitch, ${L_t \!\in\! \mathbb{R}}$ is the tether length, and ${(\phi_r, \theta_r) \!\in\! \mathbb{S}^1 \!\times\! \mathbb{S}^1}$, ${(\phi_p, \theta_p) \!\in\! \mathbb{S}^1 \!\times\! \mathbb{S}^1}$ are the roll and pitch angles of the tether and the payload, respectively. The reference crane system has three electromechanical actuators, and the control input vector is ${ \tilde{\mb u} \!\triangleq\! \left[ \tau_y~\tau_z~F_x \right]^\top \!\in\! \mathbb{R}^3}$, where ${\tau_y}$ and ${\tau_z}$ are actuator torques about the pitch and yaw joints, respectively, and ${F_x}$ is the tether actuation force, see Fig.~\ref{fig:crane}. Note that the crane is a 5-DOF system, but we have a seven-dimensional generalized coordinates vector $\mb q$ because the last two components are the payload coordinates. 

We represent the measurable ship orientation, which induces external disturbances on the crane, with a time-varying vector: ${\mb d_s(t) \!\triangleq\! \left[ \beta_s~\theta_s~\phi_s \right]^\top}$, where ${(\beta_s, \theta_s, \phi_s) \!\in\! \mathbb{S}^1 \!\times\! \mathbb{S}^1 \!\times\! \mathbb{S}^1}$ are the yaw, pitch, and roll angles of the ship, respectively. We assume that ${\mb d_s(t), \dot{\mb d}_s(t), \ddot{\mb d}_s(t)}$ are measurable signals via onboard sensing, such as an inertial measurement unit, and might be provided by estimators. 

By assuming the tether is rigid, we use the geometry of the crane via forward kinematics to define the potential energy $V_e(\mb{q}, \mb d_s)$, and the kinetic energy $T_e(\mb{q}, \dot{\mb{q}}, \mb d_s, \dot{\mb d}_s)$ of the system. Then, the Euler–Lagrange equations with the Lagrangian: ${L_e(\mb{q}, \dot{\mb{q}}, \mb d_s, \dot{\mb d}_s) \!\triangleq\! T_e(\mb{q}, \dot{\mb{q}}, \mb d_s, \dot{\mb d}_s)-V_e(\mb{q}, \mb d_s)}$ are given by 
\begin{equation*}
\frac{d}{dt} \bigg ( \frac{\partial L_e(\mb{q}, \dot{\mb{q}}, \mb d_s, \dot{\mb d}_s)}{\partial \dot{\mb q}_i} \bigg ) - \frac{\partial L_e(\mb{q}, \dot{\mb{q}}, \mb d_s, \dot{\mb d}_s)}{\partial \mb q_i} = \tilde{\mb u}_i,
\end{equation*}
where ${\tilde{\mb u}_i}$ is the $i$-th component of ${\tilde{\mb u}}$, thus ${\tilde{\mb u}_i \!\equiv\! 0}$ for the \textit{passive coordinates}, which can only be controlled through actuated coordinates, similarly $\mb q_i$ denotes the $i$-th component of $\mb q$. Using these equations, we obtain the dynamics of the system in the form of
\begin{align}
\label{eq:model}
&\underbrace{\begin{bmatrix}
    D_{11}(\mb{q}) & D_{12}(\mb{q})\\
    D_{21}(\mb{q}) & D_{22}(\mb{q})
\end{bmatrix}}_{\triangleq D(\mb{q})}
\underbrace{\begin{bmatrix}
    \ddot{\mb{q}}_1 \nonumber \\
    \ddot{\mb{q}}_2 
\end{bmatrix}}_{\ddot{\mb{q}}} \\
&~~~~~~~~~~+ \underbrace{\begin{bmatrix}
    H_1(\mb{q}, \dot{\mb{q}}, \mb d_s, \dot{\mb d}_s, \ddot{\mb d}_s) \\
    H_2(\mb{q}, \dot{\mb{q}}, \mb d_s, \dot{\mb d}_s, \ddot{\mb d}_s)
\end{bmatrix}}_{\triangleq C(\mb{q}, \dot{\mb{q}}, \mb d_s, \dot{\mb d}_s) \dot{\mb{q}} + G(\mb{q}, \mb d_s,  \ddot{\mb d}_s)}  
= \underbrace{\begin{bmatrix}
    B_1(\mb q) \\ \bf 0_{4 \times 3}
\end{bmatrix}}_{\triangleq B(\mb{q})} \tilde{\mb u}, 
\end{align}
where ${\mb{q}_1 \!\triangleq\! \left[ \beta~\theta~L_t\right]^\top}$ are the \textit{actuated (active) coordinates}, while ${\mb{q}_2 \!\triangleq\! \left[\phi_r~\theta_r~\phi_p~\theta_p \right]^\top}$ denotes the passive coordinates, ${D(\mb{q}) \!\in\! \mathbb{R}^{7 \times 7}}$ is the inertia matrix, which is positive definite and symmetric, ${C(\mb{q}, \dot{\mb{q}}, \mb d_s, \dot{\mb d}_s) \!\in\! \mathbb{R}^{7 \times 7}}$ is the Coriolis matrix, ${G(\mb{q}, \mb d_s, \ddot{\mb d}_s) \!\in\! \mathbb{R}^7 }$ collects gravity and equivalent inertial forces induced by the measured ship accelerations, and ${B(\mb{q}) \!\in\! \mathbb{R}^{7 \times 3}}$ is the actuation matrix.

In practical applications, it is not always feasible to send torque commands directly to the actuators. Instead, feedback controllers typically send commands in the form of joint velocities ${\mb u \!\in\! \mathcal{U} \!\subset\! \mathbb{R}^3}$ for the actuated coordinates $\mb q_1$ as used in our hardware. We assume the inner joint velocity loops operate at a high bandwidth; therefore, we model the actuated coordinates' dynamics with a first-order velocity model:
\begin{equation}
\label{eq:actu}
\sigma_{a} \ddot{\mb{q}}_1 + {\dot{\mb{q}}_1} = {\mb{u}},
\end{equation} 
where ${\sigma_{a} \!\triangleq\! \diag(\sigma_{\beta}, \sigma_{\theta}, \sigma_{L})}$ with the actuator time constants ${\sigma_{\beta}, \sigma_{\theta}, \sigma_{L} \!>\! 0}$. Then, by substituting \eqref{eq:actu} into \eqref{eq:model} and setting ${\tilde{\mb u} \!\equiv\! 0}$, we obtain:
\begin{align*}
\underbrace{\begin{bmatrix}
    \!\sigma_{a} & \!\!\bf 0_{3 \times 4}\\
    D_{21}(\mb{q}) & D_{22}(\mb{q})
\end{bmatrix}}_{\triangleq \bar{D}(\mb{q})} \!\!
\begin{bmatrix}
    \ddot{\mb{q}}_1 \\
    \ddot{\mb{q}}_2 
\end{bmatrix} \!\!+\!\! \underbrace{\begin{bmatrix}
   {\dot{\mb{q}}_1} \\
    H_2(\mb{q}, \dot{\mb{q}}, \mb d_s, \dot{\mb d}_s, \ddot{\mb d}_s)
\end{bmatrix}}_{\triangleq \bar{H}(\mb{q}, \dot{\mb{q}}, \mb d_s, \dot{\mb d}_s, \ddot{\mb d}_s)}  
\!\!=\!\! \underbrace{\begin{bmatrix}
    \bf I_{3}  \\ \bf 0_{4 \times 3} 
\end{bmatrix}}_{\triangleq \bar{B}} \!\! {\mb u}. 
\end{align*}
We observe that $\bar{D}(\mb{q})$, a block lower-triangular matrix, is invertible because $D_{22}(\mb{q})$ and ${\sigma_{a}}$ are both positive definite. Therefore, the system dynamics can also be expressed in the form of a control-affine, time-varying nonlinear system:
\begin{align}
\label{eq:compact}
\!\underbrace{\begin{bmatrix}
    \dot{\mb{q}} \\
   \ddot{\mb{q}}
\end{bmatrix}}_{{ \triangleq \dot{\mb x}}} 
\!\!=\!\!
\underbrace{\begin{bmatrix}
    \dot{\mb{q}} \\
   - \bar{D}^{-1}(\mb{q}) ~ \! \bar{H}(\mb{q}, \dot{\mb{q}}, \mb d_s, \dot{\mb d}_s, \ddot{\mb d}_s)
\end{bmatrix}}_{\triangleq \mb{f}(t, \mb x)}  
\!+\!
\underbrace{\begin{bmatrix}
    \bf 0_{7 \times 3} \\
    \bar{D}^{-1}(\mb{q}) \bar{B}
\end{bmatrix}}_{\triangleq \mb{g}(\mb x)} \! \mb u,
\end{align}
where the drift dynamics ${\mb f \!:\! \mathbb{R}_{\geq 0} \!\times\! \mathcal{X} \!\to\! \R^{14}}$ is piecewise continuous in $t$ and locally Lipschitz continuous in $\mb x$, and time dependence enters via the measured ship motion, and the actuation matrix ${ \mb g \!:\! \mathcal{X} \!\to\! \R^{14 \times 3}}$ is locally Lipschitz continuous.

In practice, uncertainties and disturbances cannot be fully modeled. To represent the discrepancies between the actual model of the system and the derived model \eqref{eq:compact}, we utilize an uncertainty vector ${\mb{\delta}_e \!:\!  \mathbb{R}_{\geq 0} \!\to\! \mathbb{R}^{14} }$ such that 
\begin{equation}
\label{eq:sysun}
    \dot{\mb x}  = \mb{f}(t, \mb x) + \mb{g}(\mb x) \mb u + \mb{\delta}_e(t),
\end{equation}
where ${\mb{\delta}_e \!\in\! \mathcal{D}(t)}$ captures modeling inaccuracies and unmodeled disturbance inputs, such as wind. We assume that the uncertainty is bounded: ${\| \mb{\delta}_e(t)\|_{\infty} \!\leq\! \bar{\mb{\delta}}_e }$ for some ${\bar{\mb{\delta}}_e \!>\! 0}$.

Digital implementation of any controller introduces a sample-and-hold effect on the control:
\begin{equation}
\label{eq:ZOH}
\mb{u}(t) = \mb{u}_k, \quad \forall t \in [t_k, t_{k+1}),
\end{equation}
where $\mb{u}_k$ is constant between consecutive sampling instants $t_k$ and $ t_{k+1}$ with the time index ${k \!\in\! \mathbb{N}_0}$. We assume a uniform sampling strategy with a sampling interval $T$. As the system is piecewise locally Lipschitz continuous, it admits a unique trajectory. We can obtain a discrete-time nonlinear model of the continuous-time system \eqref{eq:compact} under the sample-and-hold strategy \eqref{eq:ZOH} as
\begin{equation*} 
{\mb x }_{k+1} \! \!=\! \mb x_k  + \!\! \int^{t_{k+1}}_{t_k} \!\!\! \left ( \mb{f}(\tau, \mb x(\tau) \!) \!+\! \mb{g}(\mb x(\tau)) \mb{u}_k \!\right )  d\tau \!\triangleq\! \mb{F}(t_k, \!\mb{x}_k, \!\mb{u}_k ),
\end{equation*}
where $\mb{F}$ is the \textit{state discrete map}. Similarly, we represent the \textit{exact state discrete map} under uncertainty $\mb{\delta}_e(t)$ with $\Phi(t_k, \!\mb{x}_k, \!\mb{u}_k, \mb{\delta}_e)$. And, ${\Phi_{\tau}(t_k, \!\mb{x}_k, \!\mb{u}_k, \mb{\delta}_e)}$ and $\mb{F}_{\tau}(t_k, \!\mb{x}_k, \!\mb{u}_k)$ are flows evaluated at time ${t_k  \!+\! \tau}$ for ${\tau \!\in\! [0~ T]}$.

\subsection{Characterization of Safety}
\label{sec:safety_f}

We encode each crane safety requirement with a time-varying, continuously differentiable constraint function. Let ${\mb y \!\in\! \mathbb{R}^y}$ denote an output coordinate of interest, such as payload position. Formally, a \emph{safety function}, ${h_{\mb y} (t, \mb y)}$, is defined as a continuous mapping ${h_{\mb y} \!:\! \mathbb{R}_{\geq 0} \!\times\! \mathbb{R}^y \!\to\! \mathbb{R}}$ that quantifies the safety for a particular output of the system, ${\mb y}$ at time ${t}$. The system is said to be safe at time ${t_0}$ with respect to the output coordinate ${\mb y}$ if ${h_{\mb y}(t_0, \mb y) \!\geq\! 0}$ and unsafe if ${ h_{\mb y}(t_0, \mb y) \!<\! 0}$. 

To relate safety functions to the system dynamics \eqref{eq:sysun}, we characterize safety through the notion of forward invariance. Let the time-varying safe set ${\C(t) }$ be the 0-superlevel set of a continuously differentiable function ${h\!:\! \mathbb{R}_{\geq 0} \!\times\! \mathcal{X} \!\to\! \mathbb{R}}$:
\begin{equation*}
    \C(t) \!\triangleq\! \left\{ \mb x \!\in\! \mathcal{X} \!\subseteq\! \mathbb{R}^n \!:\! h( t, \mb x) \!\geq\! 0 \right\}.
\end{equation*}
This set is \textit{forward invariant} if, for any initial condition ${\mb x(t_0) \!\in\! \C(t_0)}$, the solution $ \mb {x}(t) $ stays in ${\C(t)}$, ${\forall t \geq\! t_0}$. For a given time-varying feedback controller ${\mb{k} ( t, \mb x)}$, the closed-loop system ${\mb F_{\rm cl}(t, \mb x) \!\triangleq\!  \mb{f}(t, \mb x) \!+\! \mb{g}(\mb x) \mb{k} ( t, \mb x) \!+\! \mb{\delta}_e(t) }$ is {\em safe} with respect to the \textit{safe set} $\C(t)$ if $\C(t)$ is forward invariant.

In our safety-critical crane control application, we have multiple safety functions $h_i(t, \mb x)$ with their 0-superlevel sets $\C_i(t)$ for ${i \!\in\! \{1,\ldots, N_h\}}$. Since our goal is to satisfy all constraints simultaneously, we define the unified safe set as the intersection of these 0-superlevel sets. Equivalently, this intersection can be represented using the $\min$ function: 
\begin{equation}
\label{eq:safe_sets}
\C(t) \!\triangleq\! \bigcap_{i =1}^{N_h}  \C_i(t) \!\equiv\! \Big\{ \mb x \!\in\! \mathcal{X} \!: \min_{i \in \{1,\ldots, N_h\}} \left [ h_i(t, \mb x) \right ] \geq 0 \Big\}.
\end{equation}

\subsection{Safe Crane Control Problem Statement}

The objective of the control problem is to track a time-varying reference trajectory for the payload-bottom position ${ \mb r_{\mb{p}} \!:\! \mathbb{R}_{\ge 0} \!\to\! \mathbb{R}^3 }$ using the payload-bottom position output, expressed in an inertial frame ${\mb p_{\mb{p}} \!\triangleq\! [{\mb p}_{x}~{\mb p}_{y}~{\mb p}_{z}]^\top \!\in\! \mathbb{R}^3}$ and its velocity: ${\mb v_{\mb{p}}  \!\in\! \mathbb{R}^3}$, while minimizing tether and payload swing rates in the sense that
\begin{align}
\label{eq:target}
&\!\!\limsup_{t \to \infty} \!\left\| \mb p_{\mb{p}}  \!-\! \mb r_{\mb{p}} (t) \right\| \!\!\leq\! \epsilon,
\!~
\limsup_{t \to \infty} \!\left\| \mb v_{\mb{p}}  \!-\! \dot{\mb r}_{\mb{p}}(t) \right\| \!\!\leq\! \epsilon_v, \nonumber \\
&\big \| [\dot{\phi}_r(t)~ \dot{\theta}_r(t)]^\top \!\big \|^2  \leq \epsilon_r, \ \ \big \| [\dot{\phi}_p(t)~ \dot{\theta}_p(t)]^\top \big \|^2 \leq \epsilon_p.
\end{align}
The problem statement of this work follows:
\begin{problem}
\label{pro:first}
Given the system dynamics \eqref{eq:compact}, control input set $\mathcal{U}$, reference trajectory, and safe set \eqref{eq:safe_sets}, synthesize a robust, safe stabilizing controller to achieve the payload transfer objectives \eqref{eq:target} in the presence of model uncertainties and disturbances as defined in \eqref{eq:sysun}. 
\end{problem}
For a visual description of the problem, performance requirements, and safety constraints, see Fig.~\ref{fig:crane}: The payload follows a reference trajectory with minimal swing while staying inside the time-varying bounding boxes and above the target surface that shows the safe insertion boundary.

\section{Safe Payload Transfer}
\label{sec:main}
In this section, we begin by defining the safety functions for the crane control. Next, we introduce the concept of a time-varying robust ZOCBF. Finally, we propose a nonlinear MPC-based safe control design method with an online parameter-adaptive ZOCBF constraint to tackle Problem~\ref{pro:first}.  

\subsection{Crane Safety Functions}
\label{sec:safe_function}
In our safe control scenarios, we consider dynamic obstacles, which may move due to the motion of a ship, resulting in time-varying safe sets, see Fig.~\ref{fig:crane} and Fig.~\ref{fig:compose}.

\textbf{Time-varying bounding boxes for collision avoidance:} Our goal is to find the optimal free-space bounding box that fits in the point cloud around the payload. To keep the payload inside a safe rectangular region, we search for a time-varying box parameterized by lower and upper bounds: ${\bar{\mb p}_{x}^l(t), \bar{\mb p}_{x}^u(t), \bar{\mb p}_{y}^l(t), \bar{\mb p}_{y}^u(t), \bar{\mb p}_{z}^l(t), \bar{\mb p}_{z}^u(t) \!\in\! \mathbb{R}}$, where ${\bar{\mb p}_{(\cdot)}^l(t) \!<\! \bar{\mb p}_{(\cdot)}^u(t), \ \forall t \!\geq\! 0}$, are the lower and upper bounds for each ${x, y, z}$ coordinate. The bounding box provides a simple-to-compute convex approximation of the safe set originating from the payload frame. The associated safe set is given by
\begin{align}
\label{eq:bound_box}
&\C_{\mb b}(t) \!\triangleq\! \big \{ \mb x \!\in\! \mathcal{X} : \mb p_{\mb{p}}  \in  \mathcal{B}(t) \big \}, \\ \nonumber
&\mathcal{B}(t) \!\triangleq\! \big [  \bar{\mb p}_{x}^l(t), \bar{\mb p}_{x}^u(t) \big ] \!\times\! [ \bar{\mb p}_{y}^l(t), \bar{\mb p}_{y}^u(t) \big ] \!\times\! [ \bar{\mb p}_{z}^l(t), \bar{\mb p}_{z}^u(t)  \big ].
\end{align}
As detailed in \eqref{eq:safe_sets}, the set $\C_{\mb b}$ can also be characterized as the intersection of the 0-superlevel sets of six safety functions:
\begin{align*}
h_{1,2,3}( t, \mb x) \!=\! \mb p_{(\cdot)} \!-\! \bar{\mb p}_{(\cdot)}^l(t), \ h_{4,5,6}( t, \mb x) \!=\! \bar{\mb p}_{(\cdot)}^u(t) \!-\! \mb p_{(\cdot)}, 
\end{align*}
where ${(\cdot)}$ represents the ${x, y, z}$ coordinates of each vector.  

\textbf{A smooth safety function for target:} We assume that the safe target surface is cylindrical with a radius $r_t$. We need a smooth function that is positive when a cylindrical payload, with a radius $r_p$ such that ${r_p \!<\! r_t}$, is aligned with the target, continuing until the bottom of the payload makes contact with the target. If the payload is not aligned with the target, the function should prevent the payload from landing outside of the target. To enforce this constraint, we define a continuous safety function: 
\begin{align*}
h_{\mb t}(t,\! \mb x) \!=\! {\mb p}_{z} \!-\!
\bigg (\!
  \frac{\gamma_0}{1 \!+\! {\rm e}^{-\gamma_1  (\rho^2(t, \mb x) \!-\! \rho_1^2 ) } } 
  - \frac{\gamma_0 \!-\! c_0}{1 \!+ \! {\rm e}^{-\gamma_1 (\rho^2(t, \mb x) - \rho_2^2) }}
\!\!\bigg ),
\end{align*}
where ${\gamma_1 \!>\! 0}$ is a parameter to tune the steepness, ${c_0 \!>\! 0}$ tunes the height of the outer plateau below the peak, ${\gamma_0 \!>\! 0}$ is the peak amplitude, ${\rho_1 \!>\! \rho_2 \!>\! 0}$, ${\mb p_{\mb{t}}(t) \!\triangleq\! [{\mb t}_{x}~{\mb t}_{y}~{\mb t}_{z}]^\top \!\in\! \mathbb{R}^3}$ is the target position, and ${\rho^2(t,\mb x) \!\triangleq\! ({\mb p}_{x} - {\mb t}_{x})^2 + ({\mb p}_{y} - {\mb t}_{y})^2}$. When ${h_{\mb t}(t, \mb x) \!\geq\! 0}$, the payload is above the boundary of the target, i.e., it is in a safe set: ${\C_{\mb t}(t) \!\triangleq\! \left\{ \mb x \!\in\! \mathcal{X}  \!:\! h_{\mb t}( t, \mb x) \!\geq\! 0 \right\}}$. Therefore, the overall time-varying safe set is the intersection given by ${\C(t) \!=\! \C_{\mb t}(t) \cap \C_{\mb b}(t)}$, see Fig.~\ref{fig:crane}.

\subsection{Robust Zero-Order Control Barrier Functions} \label{sec:ZOCBF}
We consider dynamic obstacles in our safe control scenarios, which result in time-varying safe sets. Since our applied control framework is sampled-time, and the system has uncertain dynamics, we first introduce a notion of time-varying, robust zero-order control barrier functions (R-ZOCBFs), which extend time-invariant ZOCBFs \cite{tan2025zero}. We remark that with discrete-time CBFs \cite{agrawal2017}, safety is only enforced at sample times, which may result in transient inter-sample violations occurring between $t_k$ and $t_{k+1}$. ZOCBFs address inter-sample safety.

\begin{definition}[Robust Zero-Order Control Barrier Function]
\label{def:cbf}
Let ${\C(t) \!\subseteq\! \mathcal{X} }$ be the 0-superlevel set of a continuously differentiable function ${h\!:\! \mathbb{R}_{\geq 0} \!\times\! \mathcal{X} \!\to\! \mathbb{R}}$. The function $h$ is a \textit{robust zero-order control barrier function} for the sampled system with the exact state discrete map $\Phi(t_k, \!\mb{x}_k, \!\mb{u}_k, \mb{\delta}_e)$ if there exists a class-$\mathcal{K}$ function\footnote{A continuous function ${\alpha \!:\! [0, a ) \!\to\! \mathbb{R}_{\geq 0}}$, where ${a \!>\! 0}$, belongs to class-${\mathcal{K}}$ (${\alpha \!\in\! \mathcal{K}}$) if it is strictly monotonically increasing and ${\alpha(0) \!=\! 0}$.} ${\alpha \!\in\! \mathcal{K}}$ satisfying ${\alpha(r) \!\leq\! r, \ \forall r \!>\! 0}$, and $\delta$ such that 
${\forall \mb x_k \!\in\! \C(t_k)}$:
\begin{align*}
   \sup_{\mb u_k \in \mathcal{U}} \big [ h(t_{k+1}, \!\mb{F}(t_k, \!\mb{x}_k, \!\mb{u}_k )) \!-\! h(t_k, \!\mb x_k) \!-\! \delta(t_k, \!\mb{x}_k ) \big ] \\ 
   \geq -\alpha (h(t_k, \mb x_k)),
\end{align*}
and if ${\delta \!:\! \mathbb{R}_{\geq 0} \!\times\! \mathcal{X} \!\to\! \mathbb{R}_{\geq 0}}$ satisfies 
\begin{align}
 \label{eq:cbfdel}
  \delta(t_k, \!\mb{x}_k) \geq 
  \sup_{\mb{\delta}_e \in \mathcal{D}(t), \mb u \in \mathcal{U}} \big [\Theta(t_k, \!\mb{x}_k, \!\mb{u}, \mb{\delta}_e) \big ] \geq  0
\end{align}
with 
\begin{align*}
    \Theta \!\!\triangleq\! h(\!t_{k} \!+\! T, \!\mb{F}_{T}(t_k, \!\mb{x}_k, \!\mb{u} )\!) \!-\!\! \inf_{\tau \in [0,~ T]} \! h(t_k \!+\! \tau, \!\Phi_{\tau}(t_k, \!\mb{x}_k, \!\mb{u}, \mb{\delta}_e) \!).
\end{align*}
\end{definition}
We can derive formal safety guarantees using Definition~\ref{def:cbf}, with the help of the following theorem:
\begin{theorem} \label{thm: cbf}
If $h$ is an R-ZOCBF for the sampled system with maps $(\mb F,\Phi)$ with respect to $\C(t)$, then any feedback controller ${\mb k   \!:\! \mathbb{R}_{\ge 0}  \!\times\! \mathcal{X} \!\to\! \mathcal{U}}$, ${\mb u \!=\! \mb k(t, \mb x)}$, satisfying 
\begin{align} \label{eq:cbf_condition}
    h(t_{k+1}, \mb{F}(t_k, \mb{x}_k, \!\mb{u}_k )) - h(t_k, \mb x_k) \!-\! \delta(t_k, \!\mb{x}_k) \nonumber \\ 
    \geq -\alpha (h(t_k, \mb x_k))
\end{align}
for all ${\mb x_k \!\in\! \C(t_k)}$ renders the set $\C(\cdot)$ forward invariant for all admissible uncertainties ${\mb{\delta}_e \!\in\! \mathcal{D}(t)}$.
\end{theorem}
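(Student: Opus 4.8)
The plan is to argue by induction over the sampling instants $t_k$, exploiting the fact that Definition~\ref{def:cbf} bundles together exactly the two ingredients we need: the decrease inequality \eqref{eq:cbf_condition}, phrased in terms of the \emph{nominal} discrete map $\mb F$, and the robustifying margin $\delta$ from \eqref{eq:cbfdel}, which is tailored to transfer that inequality to the \emph{actual} uncertain flow $\Phi_\tau$ over the whole inter-sample window $[t_k, t_{k+1}]$. Fix any admissible realization $\mb{\delta}_e \in \mathcal{D}(t)$ and any initial condition with $\mb x(t_0) \in \C(t_0)$, and take as the induction hypothesis that the exact state reached at $t_k$ satisfies $\mb x_k \in \C(t_k)$, i.e.\ $h(t_k,\mb x_k) \geq 0$; write $\mb u_k = \mb k(t_k,\mb x_k) \in \mathcal{U}$ for the control held over $[t_k,t_{k+1})$.

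First I would use the structural property $\alpha(r) \leq r$ for all $r>0$ together with $\alpha(0)=0$ and $h(t_k,\mb x_k)\geq 0$ to get $h(t_k,\mb x_k) - \alpha(h(t_k,\mb x_k)) \geq 0$; substituting this into \eqref{eq:cbf_condition} collapses it to $h(t_{k+1}, \mb F(t_k,\mb x_k,\mb u_k)) \geq \delta(t_k,\mb x_k)$. Next I would invoke \eqref{eq:cbfdel}, evaluated at the realized input $\mb u_k$ and the given $\mb{\delta}_e$, so that $\delta(t_k,\mb x_k) \geq \Theta(t_k,\mb x_k,\mb u_k,\mb{\delta}_e)$. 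Using the definition of $\Theta$ in Definition~\ref{def:cbf}, together with $\mb F_T(t_k,\mb x_k,\mb u_k) = \mb F(t_k,\mb x_k,\mb u_k)$ and $t_k+T=t_{k+1}$, I would chain these two inequalities so that the nominal term cancels, leaving $\inf_{\tau \in [0, T]} h(t_k+\tau, \Phi_\tau(t_k,\mb x_k,\mb u_k,\mb{\delta}_e)) \geq 0$. This says the exact trajectory stays in $\C(\cdot)$ for every $t \in [t_k,t_{k+1}]$; in particular the endpoint $\mb x_{k+1} = \Phi(t_k,\mb x_k,\mb u_k,\mb{\delta}_e) = \Phi_T(t_k,\mb x_k,\mb u_k,\mb{\delta}_e)$ obeys $h(t_{k+1},\mb x_{k+1})\geq 0$, which re-establishes the hypothesis at step $k+1$ and closes the induction. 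Since the base case $\mb x(t_0)\in\C(t_0)$ is assumed, the solution remains in $\C(t)$ for all $t \geq t_0$ and all admissible $\mb{\delta}_e$, i.e.\ $\C(\cdot)$ is forward invariant.

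The points I expect to require the most care are: (i) carrying the induction on the \emph{exact} state generated by $\Phi$, not the nominal $\mb F$-prediction, so that \eqref{eq:cbf_condition} --- assumed only for $\mb x_k \in \C(t_k)$ --- is legitimately applicable at every step; (ii) the conceptual crux that the controller's constraint is written against $\mb F$ but enforced on the true plant, so everything hinges on $\delta$ dominating the worst-case mismatch $\Theta$ between $h$ evaluated along the nominal flow at $\tau=T$ and $h$ evaluated along the true flow $\Phi_\tau$ over the \emph{entire} interval --- this is the single step that simultaneously absorbs the sample-and-hold effect \eqref{eq:ZOH} and the model uncertainty \eqref{eq:sysun}; and (iii) a minor domain check that $h(t_k,\mb x_k)$ lies in the interval $[0,a)$ on which $\alpha$ is defined (or simply taking $\alpha \in \mathcal{K}_\infty$), so that $\alpha(h(t_k,\mb x_k))$ is well defined throughout. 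None of these needs new machinery: the whole argument is reading the correct inequality off \eqref{eq:cbfdel} and combining it with the property $\alpha(r)\le r$.
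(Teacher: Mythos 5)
Your proof is correct and follows essentially the same route as the paper's: induction over the sampling instants, combining the nominal decrease condition \eqref{eq:cbf_condition} with the margin bound $\delta(t_k,\mb x_k) \ge \Theta(t_k,\mb x_k,\mb u_k,\mb{\delta}_e)$ from \eqref{eq:cbfdel} and the property $\alpha(r)\le r$. The only (harmless) difference is the order of the chaining: you cancel the nominal term to conclude $\inf_{\tau\in[0,T]} h(t_k+\tau,\Phi_\tau(t_k,\mb x_k,\mb u_k,\mb{\delta}_e))\ge 0$, which makes inter-sample safety explicit, whereas the paper bounds the endpoint $h(t_{k+1},\Phi(t_k,\mb x_k,\mb u_k,\mb{\delta}_e))$ directly by $h(t_k,\mb x_k)-\alpha(h(t_k,\mb x_k))\ge 0$.
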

\begin{proof}
\label{pr:DCBF}
To prove the statement of the theorem, for any ${k \!\in\! \mathbb{N}_0}$ assume that ${\mb x_k \!\in\! \C(t_k)}$, i.e., ${h(t_k,\mb x_k) \!\ge\! 0}$. From condition \eqref{eq:cbf_condition} with ${\Phi(t_k, \!\mb{x}_k, \!\mb{u}_k, \mb{\delta}_e)}$, and \eqref{eq:cbfdel} we have
\begin{align*}
&h(t_{k+1}, \Phi(t_k, \!\mb{x}_k, \!\mb{u}_k, \mb{\delta}_e)) \\
&\geq h(t_{k+1}, \mb{F}(t_k, \mb{x}_k, \!\mb{u}_k )) - \delta(t_k, \!\mb{x}_k)   \\
&\geq h(t_k,\mb x_k)  \!-\! \alpha ( h(t_k, \mb x_k) ) \!\triangleq\! \bar{\alpha} ( h(t_k, \mb x_k) ),
\end{align*}
where ${\bar{\alpha} (r) \!\triangleq\! r \!-\! \alpha(r)}$. Since ${\alpha \!\in\! \mathcal{K}}$, and ${\alpha(r) \! \leq \! r}$ for all ${r \!\geq\! 0}$, we have ${\bar{\alpha}(r) \!\geq\! 0}$ for all ${r \!\geq\! 0}$, and ${\bar{\alpha}(0) \!=\! 0}$. Therefore, it follows that ${h(t_{k+1}, \Phi(t_k, \!\mb{x}_k, \!\mb{u}_k, \mb{\delta}_e)) \!\geq\! 0}$, which implies ${\Phi(t_k, \!\mb{x}_k, \!\mb{u}_k, \mb{\delta}_e) \!\in\! \C(t_{k+1})}$. Since ${\mb x_0 \!\in\! \C(t_0)}$ by assumption, by induction ${\mb x_k \!\in\! \C(t_k)}$ for all ${k \!\in\! \mathbb{N}_0}$; hence $\C(\cdot)$ is forward invariant under any feedback controller that enforces \eqref{eq:cbf_condition} at each time step $t_k$.
\end{proof}

\subsection{Safety-Critical Model Predictive Control}
In order to synthesize the safe controller, we formulate a continuous-time nonlinear MPC problem:
\begin{align*}
\begin{array}{l}
\!\!{{{\mb x}^*(t)},\! {\mb u^*(t)} \!=\!  \ }
\displaystyle \! \argmin_{ \mb{x}(\cdot), \ \mb u(\cdot) } \!\!~ {\mathcal{J}_{T}(T_f, \!\mb x (T_f)\!) \!+\!\!\! \int_{t = 0}^{T_f}  \!\!\! \mathcal{J}(t, \!\mb x(t), \!\mb u(t)) dt  }   \\ [4mm]
~~~~~~~~~~~~~~~~~\!\textrm{s.t.} \\ [2mm]
\!\dot{\mb x}(t) \!=\! \mb{f}(t, \mb x(t)) \!+\! \mb{g}(\mb x(t)) \mb u(t), \ \mb x(0) \!=\! \mb x_0, \  \forall t \!\in\! [0, T_f]\\ [2mm]
\!h_{\mb t}(t\!+\!T, \mb{F}(t, \mb{x}(t), \mb{u}(t) )) \!-\! h_{\mb t}(t, \mb x(t)) \!-\! \mb{\delta}_{\mb t}(t, \mb{x}(t)) \\[1mm]
~~~~~~~~~~~~~~~~~~~~~~~~~\!\geq\! -\alpha (h_{\mb t}(t, \mb x(t))), \ \forall t \!\in\! [0, T_f \!-\!T] \\ [2mm]
\!{h}_i(t, \mb x(t)) \!\geq\! 0, i \!\in\! \{1,\ldots, 6\}, \ \forall t \!\in\! [0, T_f] \\ [2mm]
\!\mb u(t) \!\in\! \mathcal{U}, \ \forall t \!\in\! [0, T_f] \\ [2mm]
\!\mb x(t) \!\in\! \mathcal{X}, \ \forall t \!\in\! [0, T_f],
\end{array}
\end{align*}
where ${\mb u^*(t)}$ and ${\mb x^*(t)}$ are the respective optimal input and state signals, $T_f$ is the horizon. The objective function has two terms: \textit{the stage cost} ${\mathcal{J} \!:\! \mathbb{R}_{\geq 0} \!\times\! \mathcal{X} \!\times\! \mathcal{U} \!\to\! \mathbb{R}_{\geq 0}}$ and \textit{the terminal cost} ${\mathcal{J}_T \!:\! \mathbb{R}_{\geq 0} \!\times\! \mathcal{X} \!\to\! \mathbb{R}_{\geq 0}}$. We formulate $\mathcal{J}$ as a summation of six different functions balancing input use and payload trajectory tracking and swing rate: 
\begin{align*}
    &\mathcal{J}(t, \mb x(t), \mb u(t)) = \big \| \mb u(t) \big \|_{W_{1}}^2 + \big \| \mb u(t) - {\mb u}_m(t) \big \|_{W_{2}}^2  \\
    &~~~~~+ \big \| \mb p_{\mb{p}} (t, \mb x(t)) - \mb r_{\mb{p}} (t) \big \|_{W_{3}}^2  + \big \| \mb v_{\mb{p}} (t, \mb x(t)) - \dot{\mb r}_{\mb{p}}(t) \big \|_{W_{4}}^2 
   \\
  &~~~~~+ \big \| [\dot{\phi}_p(t)~ \dot{\theta}_p(t)]^\top \big \|_{W_{5}}^2 + \big \| [\dot{\phi}_r(t)~ \dot{\theta}_r(t)]^\top \!\big \|_{W_{6}}^2, 
\end{align*}
where ${W_i \succeq 0}$, ${i \!\in\! \{1,\ldots, 6\}}$,  are diagonal weighting matrices, and ${{\mb u}_m(t) \!\in\! \mathcal{U}}$ represents the actual measured joint velocities. Similarly, we formulate the terminal cost as
\begin{align*}
    &\mathcal{J}_T(t, \mb x(T_f)) =  \\ 
    &  \big \| \mb p_{\mb{p}} (T_f, \!\mb x(T_f)) \!-\! \mb r_{\mb{p}} (T_f) \big \|_{W_{3}}^2 \!\!+\! \big \| \mb v_{\mb{p}} (T_f, \!\mb x(T_f)) \!-\! \dot{\mb r}_{\mb{p}}(T_f) \big \|_{W_{4}}^2   
   \\
  & + \big \| [\dot{\phi}_p(T_f)~ \dot{\theta}_p(T_f)]^\top \big \|_{W_{5}}^2 + \big \| [\dot{\phi}_r(T_f)~ \dot{\theta}_r(T_f)]^\top \big \|_{W_{6}}^2. 
\end{align*}

In the cost function $\mathcal{J}$, the last four terms represent the objectives outlined in \eqref{eq:target} and stated in Problem~\ref{pro:first}. To penalize the control input, the MPC utilizes ${\big \| \mb u(t) \big \|_{W_{1}}^2}$, which denotes the weighted norm of the control vector. The term ${\big \| \mb u(t) - {\mb u}_m(t) \big \|_{W_{2}}^2}$ penalizes deviations between commanded and measured joint velocities, helping to minimize excessive fluctuations in the control input.

In our implementation, we utilize the \software{acados} software package \cite{acados}. 
In particular, we employ the SQP real-time iteration (RTI) framework \cite{diehl2002real} with the HPIPM partial-condensing QP solver. We set a nonlinear MPC horizon of ${T_f \!=\! 1 \!~s}$ discretized into ${N_T \!=\! 30}$ intervals in \software{acados}.

Uncertainty and external disturbances are unmeasurable dynamics in general; therefore, $\mb{\delta}_e$ in \eqref{eq:sysun} is unknown. Ignoring the effects of $\mb{\delta}_e$ on the safety condition in \eqref{eq:cbf_condition} may lead to unsafe behavior. Inspired by uncertainty estimation methods \cite{chen2015disturbance}, which involve filtering the discrepancy between the measured states and nominal dynamics, we can observe the effects of uncertainty on each state. We utilize these effects to represent $\mb{\delta}_e$, whose effect on ${h_{\mb t}}$ is denoted with ${\mb{\delta}_{\mb t} \!\in\! \mathbb{R}_{\geq 0}}$ for all ${t \!\geq\! 0}$ in the MPC problem, with an upper bound in the R-ZOCBF condition at each time step $t$.


\begin{figure*}[t]
  \centering
  \setlength{\tabcolsep}{2pt}
  \renewcommand{\arraystretch}{1.0}
  \begin{tabular}{ccccc}
    \includegraphics[width=0.19\textwidth]{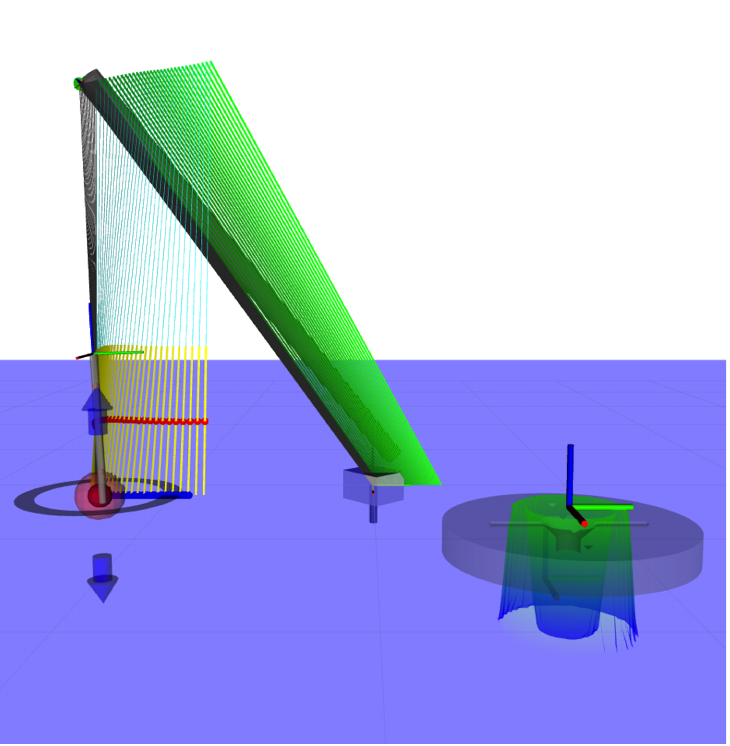} &
    \includegraphics[width=0.19\textwidth]{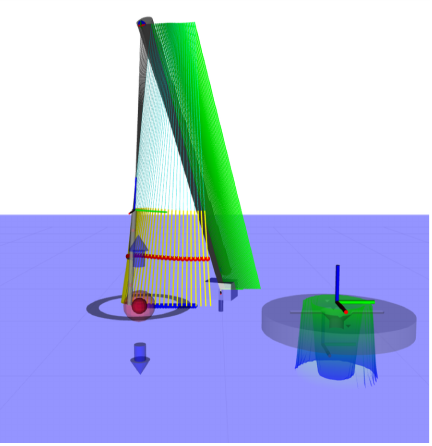} &
    \includegraphics[width=0.19\textwidth]{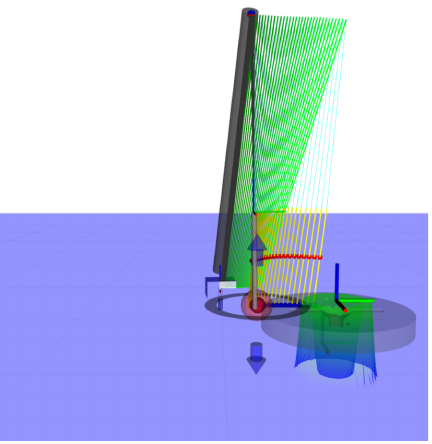} &
    \includegraphics[width=0.19\textwidth]{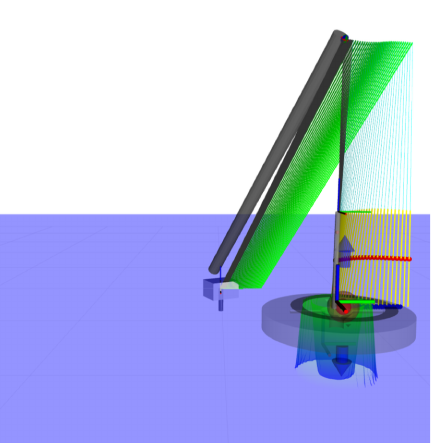} &
    \includegraphics[width=0.19\textwidth]{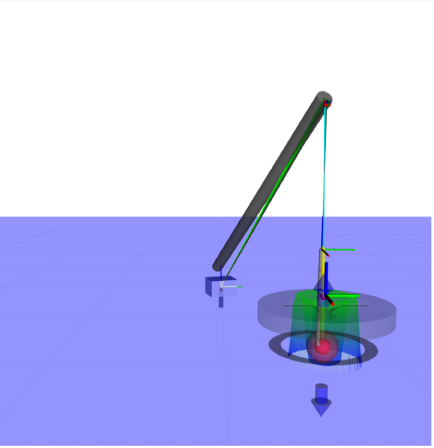} \\
    \footnotesize $t=0$ & \footnotesize $t=2$ & \footnotesize $t=4$ & \footnotesize $t=6$ & \footnotesize $t=10$
  \end{tabular}
  \caption{Sequence of the platform-mounted crane performing payload insertion. Snapshots at \(t\in\{0, 2, 4, 6, 10\}\ \mathrm{s} \) show approach, alignment and insertion of the payload while the base undergoes sinusoidal motion. A green-to-blue colored overlay shows the safety function constraint near the target. The MPC-calculated trajectories of the boom and payload are visualized using green segments for the boom and yellow segments for the payload.}
  \label{fig:sim_snapshots}
   \vskip - 5mm
\end{figure*}

\begin{figure*}[t]
  \centering

  \subfloat[Simulation results with nominal safety function. \label{fig:nominal_payload_safety_sim}]{
    \begin{minipage}{0.48\linewidth}
      \centering
      \includegraphics[width=0.49\linewidth]{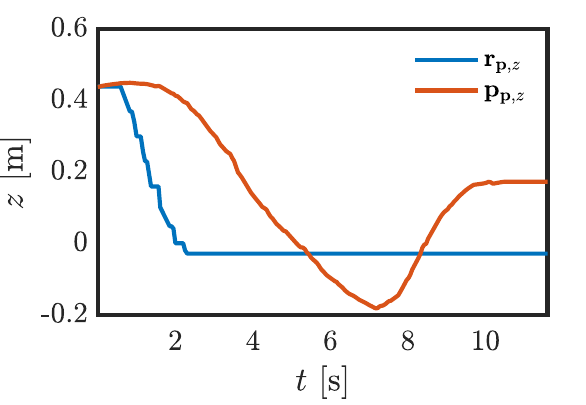}\hfill
      \includegraphics[width=0.49\linewidth]{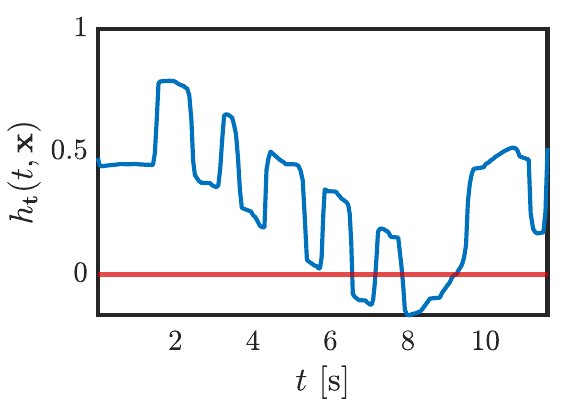}
    \end{minipage}
  }
  \hfill
  \subfloat[Simulation results with robust safety constraint. \label{fig:robust_payload_safety_sim}]{
    \begin{minipage}{0.48\linewidth}
      \centering
      \includegraphics[width=0.49\linewidth]{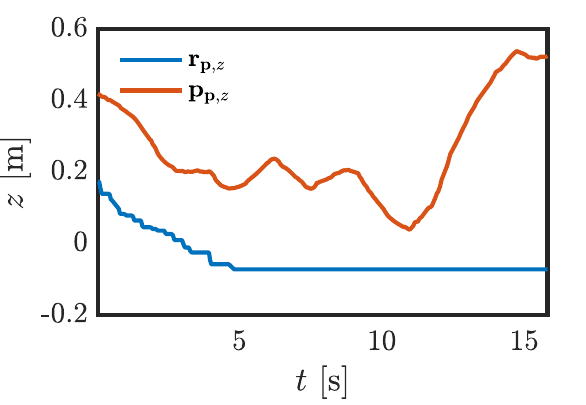}\hfill
      \includegraphics[width=0.49\linewidth]{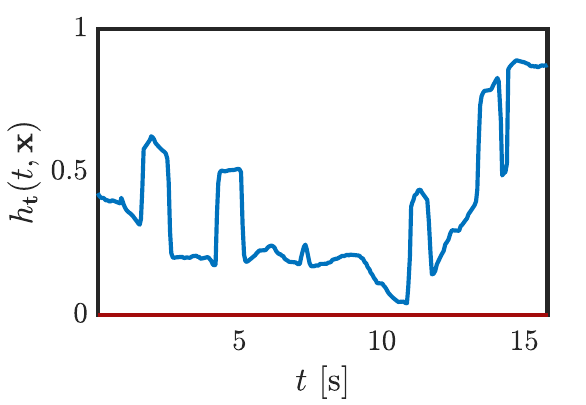}
    \end{minipage}
  }

  \caption{Comparison of nominal and robust safety configurations across simulation runs. Each subfigure contains the vertical payload tracking $(\mathbf{r}_{\mathbf{p},z}, \mathbf{p}_{\mathbf{p},z} \ \text{vs.\ time})$ on the left and the corresponding safety function $h_{\mb t}(t,\mathbf{x}(t))$ on the right. Safety is violated when $h_{\mb t}(t, \mb x(t))<0$. Our robust MPC with R-ZOCBF constraint maintains safety, while the MPC with the nominal safety function violates safety. 
}
  \label{fig:simulation_payload_safety_panel}
   \vskip - 5mm
\end{figure*}

We assume that a bound for the effect of uncertainties on the states: ${\| {\mb x}  \!-\!  \hat{\mb x} \| \!\leq\! \Delta}$, where $\hat{\mb x}$ is the state evaluated with uncertainty, i.e., actual state, for some ${\Delta \geq 0}$ is given. In practice, system identification methods, or data-driven uncertainty predictors can provide an estimation of $\Delta$ along with an empirical prediction bound. 

To select the parameter $\mb{\delta}_{\mb t}$ online, we define an optimization problem at each time step $t_k$ in ${t \!\in\! [0, T_f]}$:
\begin{align*}
\begin{array}{l}
    \mb{\delta}_{\mb t} (t, {\mb x}) = \displaystyle \! \argmin_{ \mb{\delta}_{\mb t} \in \mathbb{R}_{\geq 0}} \!\! ~~~\mb{\delta}_{\mb t} 
    \\ [2mm]
~~~~~~~~~~~~\textrm{s.t.} \\
\mb u(t) \!\in\! \mathcal{U}, \  \!{\mb x}(t) \!\in\! \mathcal{X} \\ [2mm]
h_{\mb t}(t\!+\!T, \mb{F}(t, \!{\mb x}(t), \!\mb{u}(t) )\!) \!-\! 
h_{\mb t}(t_k \!+\! \tau, \!\Phi_{\tau}(t, \!{\mb x}, \!\mb{u}, \mb{\delta}_e) \!)
\!-\! \mb{\delta}_{\mb t} \\ [1mm]
~~~~~~~~~\geq -\alpha (h_{\mb t}(t, {\mb x}(t))), ~~\forall \tau \!\in\! [0,~T], ~~\forall \mb{\delta}_e \in \mathcal{D}(t).  
\end{array}
\end{align*}
This optimization problem requires evaluating the dynamics with uncertainty, which is unknown; furthermore, it assumes that the value of $\mb u$, which is a decision variable in the MPC problem, is given; therefore, it is not possible to solve this problem. Additionally, because of the requirement ${\forall \tau \!\in\! [0, T]}$ in the last constraint of the optimization problem, this constraint represents an infinite number of constraints. To address these issues, we approximate $\hat{\mb x}$ by sampling ${N \!\in\! \mathbb{N}}$ points, $\{\tilde{\mb x}_1, \dots, \tilde{\mb x}_N \}$, from the closed ball of radius ${\Delta}$ centered at ${{\mb x}}$ in $\R^n$, $B_\Delta({\mb x})$. We also discretize the constraint for ${\tau \!\in\! \{ 0, \Delta_{\tau}, \ldots, T\}}$, where $\Delta_{\tau}$ is a discretization step satisfying ${T/\Delta_{\tau} \!\in\! \mathbb{N}}$. And, we approximate $\mb u$ with the input $\mb u^*_{k-1}$ applied during the time interval $[t_{k-1}, t_k)$ by MPC. Finally, we select ${\mb{\delta}_{\mb t}}$ by solving a sampling-based optimization problem:
\begin{align*}
\begin{array}{l}
    \mb{\delta}_{\mb t} (t, {\mb x}) = \displaystyle \! \argmin_{ \mb{\delta}_{\mb t} \in \mathbb{R}_{\geq 0}} \!\! ~~~\mb{\delta}_{\mb t} 
    \\ [1mm]
~~~~~~~~~~~~~~~~~\!\textrm{s.t.} ~~~ \!\tilde{\mb x}(t) \!\in\! \{\tilde{\mb x}_1, \dots, \tilde{\mb x}_N \}, \tilde{\mb x}_i \!\in\! B_\Delta({\mb x}) \\ [2mm]
\!h_{\mb t}(t\!+\!T, \mb{F}(t, {\mb x}(t), \mb u^*_{k-1} )) \!-\! h_{\mb t}(t \!+\! \tau, \!\mb{F}_{\tau}(t, \!\tilde{\mb x}, \!\mb u^*_{k-1}) ) \!-\! \mb{\delta}_{\mb t} \\[1mm]
~~~~~~~~~~~~~~~~~~~~\geq -\alpha (h_{\mb t}(t, {\mb x}(t))), \ \tau \!\in\! \{ 0, \Delta_{\tau}, \ldots, T\}. 
\end{array}
\end{align*}
We remark that this optimization problem has a single decision variable, and both $ \tilde{\mb x}$ and $\tau$ are taken from finite grids; the program simplifies to a one-dimensional linear check: evaluate each sampled constraint and select the most restrictive bound.

\section{Simulations and Experiments} 
\label{sec:exp}
In this section, we evaluate the performance of our robust MPC on the crane system shown in Fig.~\ref{fig:crane}, using both simulations and hardware experiments.


\subsection{Simulations}

Simulations were conducted in RViz on a workstation running Ubuntu 22.04, equipped with an Intel Core i9 processor. The system was modeled symbolically as described in Section~\ref{sec:pre} and the optimal control problem was solved using \software{acados} \cite{acados}, which is a tool for nonlinear optimization.

In our simulation, the control objective is to cause the payload to follow a desired insertion trajectory while avoiding collisions with the surrounding environment and the target structure, as shown in Fig.~\ref{fig:sim_snapshots}. In this simulation, the base translates laterally with sinusoidal motion, resulting in a lateral displacement of $\pm 0.05~\mathrm{m}$ (peak-to-peak $(0.10~\mathrm{m}$) with one full cycle every second.

To enforce safety, the MPC applies an R-ZOCBF constraint to the target safety function $h_{\mb t}$ (Sections~\ref{sec:safe_function} and~\ref{sec:ZOCBF}). This constraint prevents payload contact with the environment and target walls while allowing the payload to descend below a clearance threshold when already within the target. 

To evaluate robustness, we model the effect of external disturbances and state estimation errors as bounded uncertainties on each state variable. The uncertainty vector represents model inaccuracies and measurement errors in the hardware that the controller must tolerate. Angular uncertainty is bounded by $2^\circ$ payload roll and pitch angles, as well as base yaw and pitch. Angular velocity estimates are expected to deviate up to ${4~^\circ/\text{s}}$ for the tether and payload. Angular velocity estimates for the base yaw and pitch are bounded by ${3~^\circ/\text{s}}$. The tether length estimate can deviate by as much as ${0.04~\text{m}}$, with corresponding rate errors up to ${0.08~\text{m/s}}$. The payload’s position is subject to uncertainty of up to $0.03~\text{m}$ per axis, with velocity errors allowed up to ${0.09~\text{m/s}}$ per axis. The target’s position estimate errors are bounded by ${0.03~\text{m}}$ per axis. 

We first analyze the performance of our MPC controller using the baseline safety constraint ${h_{\mb t}(t,\mb x) \!\geq\! 0}$ under dynamic base motion. Fig.~\ref{fig:nominal_payload_safety_sim} shows the vertical trajectory and corresponding safety constraint violations, demonstrating that the baseline constraint is insufficient to achieve safety. In the absence of our robust safety constraint, the payload violates our safety condition, resulting in a collision with the target or the environment. We then perform the same simulation using our robust safety constraint. Fig.~\ref{fig:robust_payload_safety_sim} shows the vertical trajectory and safety constraint satisfaction for this scenario, indicating safe operation and robustness to the base disturbance and model uncertainties.

These simulations demonstrate that integrating an R-ZOCBF constraint into MPC significantly enhances safety for crane operations under uncertainty. The robust MPC prevents payload collisions while minimizing additional control effort. These results highlight the effectiveness of our approach for real-time, safe payload transfer.

\subsection{Experimental Results}
\label{sec:hardware}

\begin{figure}[t]
    \centering
    \vspace{0.22cm}
    \includegraphics[width=1\linewidth]
    {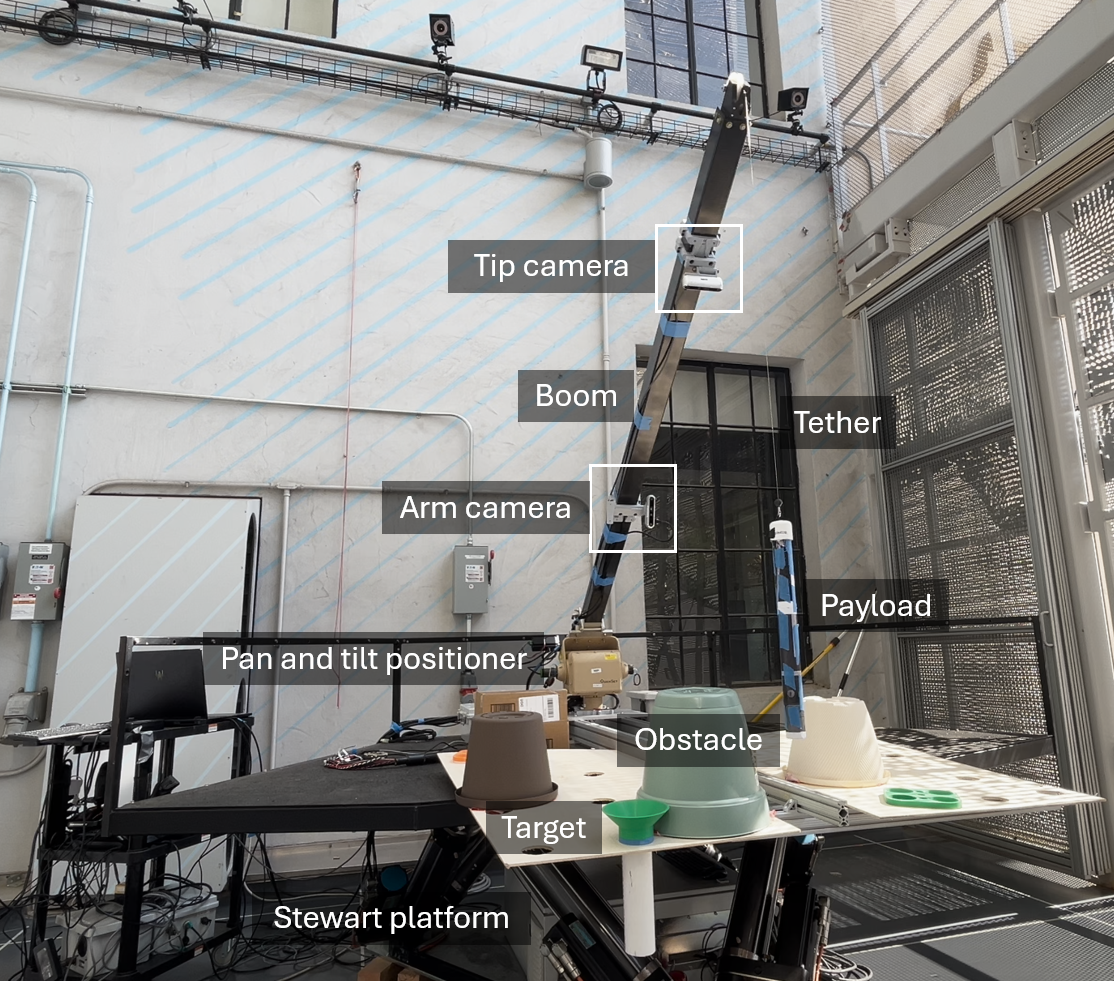}%
    \caption{Experimental setup of the reference crane system and the Stewart platform. 
    }
    \label{fig:hardware}
    \vskip - 5mm
\end{figure}

\begin{figure*}[t]
    \centering
    \includegraphics[width=1\linewidth]{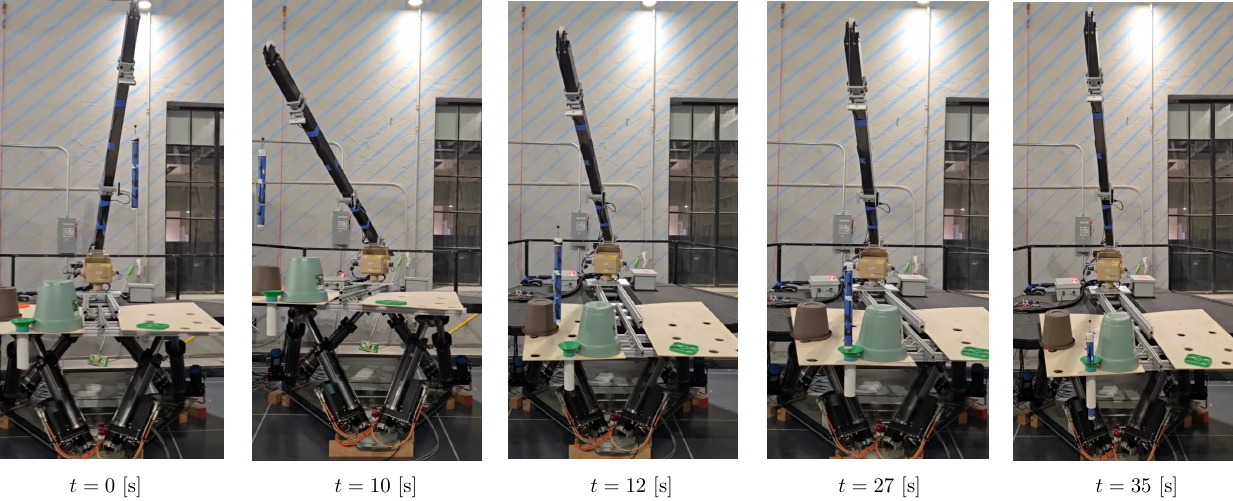}
    \caption{Hardware demonstration. The proposed robust MPC-based safe control framework ensures safe payload transfer to a target under external disturbances simulated by the Stewart platform and model uncertainties. } 
    \label{fig:compose}
    \vskip - 5mm
 \end{figure*}

\begin{figure*}[t]
  \centering

  \subfloat[Hardware results with the nominal safety function.\label{fig:nominal_payload_safety}]{
    \begin{minipage}{0.48\linewidth}
      \centering
      \includegraphics[width=0.5\linewidth]{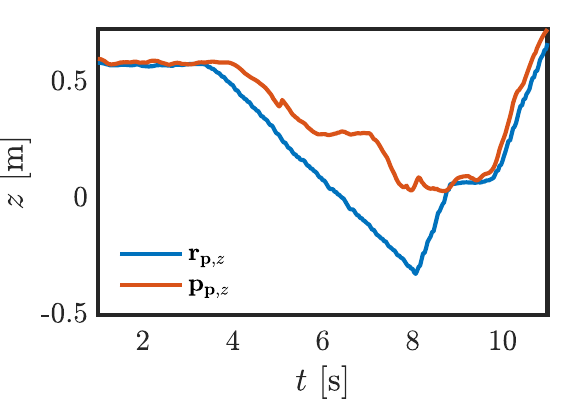}\hfill
      \includegraphics[width=0.5\linewidth]{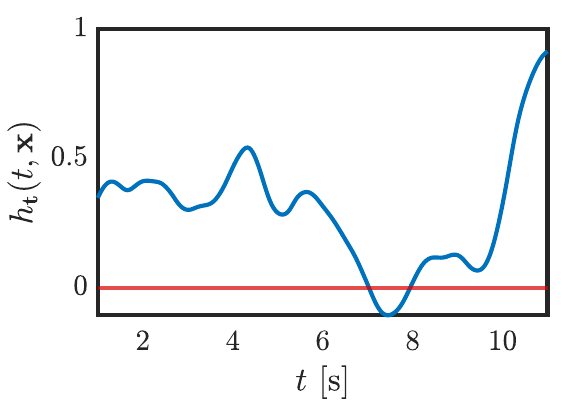}
    \end{minipage}
  }
  \hfill
  \subfloat[Experimental results with robust safety constraint.\label{fig:robust_payload_safety}]{
    \begin{minipage}{0.48\linewidth}
      \centering
      \includegraphics[width=0.5\linewidth]{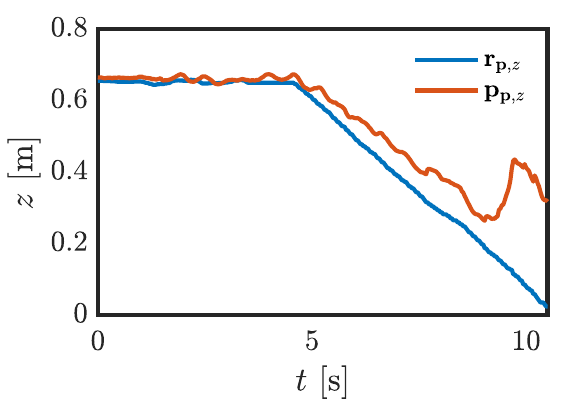}\hfill
      \includegraphics[width=0.5\linewidth]{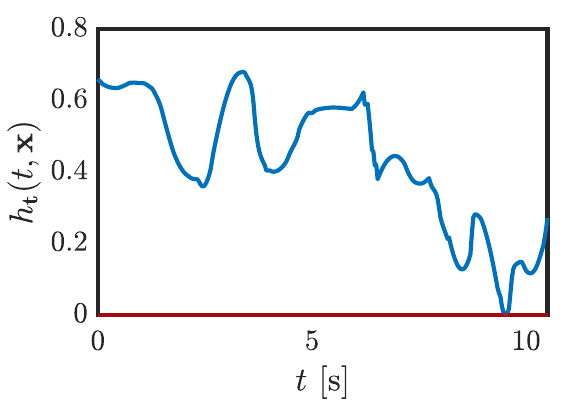}
    \end{minipage}
  }

  \caption{Comparison of nominal and robust safety configurations across hardware runs. Each subfigure contains the vertical payload tracking $(\mathbf{r}_{\mathbf{p},z}, \mathbf{p}_{\mathbf{p},z} \ \text{vs.\ time})$ on the left and the corresponding safety function $h_{\mb t}(t,\mathbf{x}(t))$ on the right. Safety is violated when the nominal safety function $h_{\mb t}(t, \mb x(t))$ is utilized in the MPC. Our robust MPC with R-ZOCBF constraint consistently preserves safety while staying close to the reference.
}
  \label{fig:hardware_payload_safety_panel}
   \vskip - 5mm
\end{figure*}

The hardware experiments aim to simulate with high fidelity a crane mounted on the deck of a ship at sea, as shown in Fig.~\ref{fig:compose}, Fig.~\ref{fig:hardware}, and Fig.~\ref{fig:crane}. Similar to the simulation, the hardware experiments focus on inserting a suspended payload into a bounded target aperture while avoiding static obstacles. Both the target and obstacles are mounted to the top plate of the Stewart platform, as if placed on the deck of a ship. The intended application also requires the payload not to hit the deck beyond the lip of the target receptacle. The payload consists of a PVC pipe with a diameter of 4 cm and a length of 60 cm. The target tube features an aperture of 5 cm and a cone aperture of 15 cm.

The crane system consists of a custom 8 ft carbon fiber boom mounted to a QuickSet MPT-50 Pan and Tilt Positioner that provides yaw and pitch motions. A weighted fishing line acts as the tether, which is spooled from the back of the crane and runs along the top of the boom. The crane is attached to the top plate of a Stewart platform that simulates periodic wave motion identical to that used in the simulation. 

All experiments are run using three "on-board" computers and one operator computer, connected to the operator controls. The "on-board" computers serve three roles: GNC, Perception-Payload-Estimation and Perception-Mapping. The "GNC" computer is an NVIDIA Jetson AGX Orin, where the MPC and safety mechanisms run. The Perception-Payload-Estimation computer is an Acer Predator Helios 18 laptop with an Intel i9-14900HX processor, 32GB of DDR5 RAM, and a GeForce RTX 4090 with 16GB of VRAM. Its task is to perform visual tracking and pose estimation of the payload using the mid-boom camera stream. Finally, the Perception-Mapping computer is an NVIDIA Jetson AGX Orin, which tracks the target pose and produces and updates a map of the ship deck for obstacle avoidance.

The perception system consists of two Intel RealSense D457 cameras. The first is mounted near the middle of the boom with the optical axis tilted 45$^\circ$ down from the axis of the boom. The camera is positioned vertically to ensure maximum visibility of the payload at various tether lengths. The second is mounted near the tip of the boom, pointing backwards, at an angle of 60$^\circ$ with respect to the boom axis. This camera enables consistent visual tracking of the target, as well as mapping of the "deck" of the ship for obstacle avoidance purposes. The cameras are configured to ${848 \!\times\! 480}$ resolution at 60 FPS and ${1280 \!\times\! 720}$ resolution at 30 FPS, respectively. The higher frame rate, synonymous with lower latency, of the mid-boom camera is crucial for enabling the control system to perform stabilization in the presence of external disturbances. Inertial measurements are provided by Vectornav VN-100 inertial measurement units attached to the crane base and mounted above the mid-boom camera. The hardware setup is shown in Fig.~\ref{fig:hardware}.

The perception pipeline consists of two main components: real-time object segmentation and object-specific pose estimation. The foundation of our perception pipeline is a video object segmentation (VOS) model. This model classifies the pixels based on their belonging to a desired object to track. The process is initialized by a human operator who provides an initial selection, e.g., via clicks and bounding boxes, on the objects of interest. The VOS model then propagates these segmentation masks to subsequent frames automatically.

For this task, we employ SAM2-Tiny \cite{ravi2024sam2} due to its high efficiency and state-of-the-art performance, which are critical for our low-latency requirements. The model provides a binary mask for each tracked object in every new frame. Along with depth images, these segmentation masks serve as the input for the pose estimation modules.

We determine the pose of the payload and target in two ways. First, the target pose is obtained by capturing a set of segmented point clouds in the first few frames, creating a smoothed model. Target pose estimation is achieved via iterative closest point (ICP) fits from newly detected segmented point clouds and the initial model. Payload pose estimation cannot reliably rely on ICP due to fast movements and varying observed aspect ratios of the cylinder. We therefore exploit symmetries of the cylinder and its projection on images. The main assumption is that the cylinder is "long" and that its side remains visible almost all the time. To obtain an estimate of the orientation of the cylinder, we first fit a line through the segmented mask and select a range of pixels in the vicinity of the line. We project the depths of the selected pixels into a corresponding point cloud, where we perform random sample consensus (RANSAC) line fitting, yielding the orientation of the cylinder.

With the control and perception systems in place, we assessed whether the proposed safety constraints could achieve safe insertion performance under dynamic base motion. The platform for the hardware experiment was subjected to the same lateral sinusoidal translation as the platform in simulation, ${\pm 0.05~\mathrm{m}}$ with one full cycle occurring every second. The desired position for the payload was then set at a point inside the target, and the system response was tracked. We evaluated safety and performance under two constraints: 
(i) a robust safety constraint—the R-ZOCBF condition defined in \eqref{eq:cbf_condition}, 
and (ii) a nominal safety constraint corresponding to the unmodified target safety function ${h_{\mb t}(t,\mb x) \!\geq\! 0}$ (i.e., no robustness margin) introduced in Section~\ref{sec:safe_function}. Fig.~\ref{fig:nominal_payload_safety} shows the failure of our system to maintain safety when commanded to enter the target area under disturbance. In contrast, Fig.~\ref{fig:robust_payload_safety} illustrates that under our R-ZOCBF constraint, the system maintains safety despite base disturbances and insertion commands that would otherwise be unsafe.

\section{Conclusion and Future Work} 
\label{sec:conc}
This work introduces the first experimental demonstration of a robust MPC framework for ship-mounted cranes that enforces safety via a robust zero-order control barrier function (R-ZOCBF) constraint. Our safe control method couples a smooth safety function for payload insertion with time-varying bounding boxes generated using onboard perception, and it uses a sampling-based online method to tune the R-ZOCBF margin. Multiple simulations and hardware trials achieved accurate placement of the payload while satisfying constraints in the presence of significant base excitation. We observe that online adaptation enables the system to automatically tune conservatism, balancing safety with the goal of successful payload transfer.

In addition to ship-mounted cranes, our approach applies more generally to robotic insertion, assembly, and manipulation under uncertainty. Future work will extend this approach to address challenges such as safe peg-in-hole assembly and spacecraft docking problems, incorporating contact dynamics. We also plan to formalize guarantees of stability and recursive feasibility, as well as integrate learning-based prediction of environmental disturbances.

\begin{spacing}{0.87}
\bibliographystyle{IEEEtran}
\bibliography{Bib/refs}
\end{spacing}

\end{spacing}

\end{document}